\documentclass[a4paper, 10pt]{extarticle}
\usepackage{amsmath}
\usepackage{amssymb}
\usepackage{amsfonts}
\usepackage{graphicx}
\usepackage{subfigure}
\usepackage{authblk}
\usepackage[textwidth=16cm,textheight=20cm]{geometry}
\newtheorem{theorem}{Theorem}

\newtheorem{example}[theorem]{Example}

\newtheorem{lemma}[theorem]{Lemma}

\newenvironment{proof}[1][Proof]{\noindent\textbf{#1.} }{\ \rule{0.5em}{0.5em}}
\title{Schr\"{o}dinger equations on elliptic curves: symmetries, solutions and eigenvalue problem}
\author[1,\thanks{\textit{E-mail: }\texttt{valentin.lychagin@uit.no}}]{Valentin Lychagin}
\author[1,\thanks{\textit{E-mail: }\texttt{mihail\underline{ }roop@mail.ru}}]{Mikhail Roop}
\affil[1]{V.A. Trapeznikov Institute of Control Sciences, Russian Academy of Sciences, 65 Profsoyuznaya Str., 117997 Moscow, Russia}

\setcounter{Maxaffil}{0}

\begin{document}
\maketitle

\abstract{
In this paper, we study Schr\"{o}dinger equations on elliptic curves called generalized Lam\'{e} equations. We suggest a method of finding integrable potentials for Schr\"{o}dinger type equations. We apply this method to the Lam\'{e} equations and provide a sequence of integrable potentials for which the eigenvalue problem is solved explicitly.
}

\section{Introduction}
The Lam\'{e} equation firstly appeared in \cite{Lame} in separation of variables for the Laplace equation in elliptic coordinates. Later the Lam\'{e} equation  was used in various problems of quantum mechanics, for example in theory of periodic instantons \cite{LMKT} and also appeared as Schr\"{o}dinger equation for periodic potentials (see, for example, \cite{MK}). The present work is devoted to generalized Lam\'{e} equations and provides a method of finding analytical solutions to them, as well as solutions to the eigenvalue problem.

This paper has the following structure. In Sect. 2, we describe a method of finding integrable potentials and solutions for Schr\"{o}dinger type equations. We recall necessary constructions from the geometrical theory of ODEs and apply its results to linear second order ODEs of Schr\"{o}dinger type. We show that having a symmetry one can obtain an infinite sequence of integrable potentials and corresponding solutions. Sect. 3 is devoted to Schr\"{o}dinger equations on elliptic curves, the so-called generalized Lam\'{e} equations. Such curves are parameterized by Weierstrass $p$-functions and we analyze the case when both potential and symmetry are polynomials in Weierstrass $p$-function and linear with respect to its first derivative. The most complete results are obtained for so-called even and odd cases, when the potential is an even function and the symmetry function is either even, or odd. We get a series of integrable potentials linear in the Weierstrass $p$-function, for which one can solve both eigenvalue problem and corresponding Schr\"{o}dinger equation explicitly.

The results of this paper can be interpreted in the light of more general problem: how can one construct solutions of Schr\"{o}dinger equations with potential satisfying a given ODE? The present paper provides the method for potentials satisfying $n$-th stationary KdV equations \cite{LychLych}, and other cases are the subject of further elaboration.

\section{Symmetries and integrals for Schr\"{o}dinger type equations}
\subsection{Symmetries of ODEs}
Here, we briefly describe a geometrical approach to ordinary differential equations following \cite{VinKr,KLR}.

Let us consider an ordinary differential equation of order $k$. We will restrict our consideration to resolved with respect to the highest derivative ODEs of the form
\begin{equation*}
u^{(k)}=f\left(x,u,u^{\prime},\ldots u^{(k-1)}\right),
\end{equation*}
which can be naturally associated with a smooth submanifold $\mathcal{E}$ in a space of $k$-jets $J^{k}(\mathbb{R})$ of functions of $x$ with canonical coordinates $(x,u_{0},\ldots,u_{k})$ corresponding to the independent variable, unknown function and its derivatives up to order $k$:
\begin{equation*}
\mathcal{E}=\left\{u_{k}=f(x,u_{0},\ldots,u_{k-1})\right\}\subset J^{k}(\mathbb{R}).
\end{equation*}
The space of $k$-jets $J^{k}(\mathbb{R})$ is equipped with the Cartan distribution
\begin{equation*}
\mathcal{C}\colon J^{k}\ni\theta\mapsto\mathcal{C}_{\theta}\subset T_{\theta}J^{k},
\end{equation*}
generated by vector fields $\partial_{u_{k}}$ and
\begin{equation*}
\mathcal{D}=\partial_{x}+u_{1}\partial_{u_{0}}+\cdots+u_{k}\partial_{u_{k-1}},
\end{equation*}
or, equivalently, by the Cartan forms
\begin{equation*}
\omega_{j}=du_{j}-u_{j+1}dx,\quad j=\overline{0,k-1}.
\end{equation*}
The restriction $\mathcal{C}_{\mathcal{E}}$ of the Cartan distribution to the submanifold $\mathcal{E}$ is a one-dimensional distribution almost everywhere on $\mathcal{E}$:
\begin{equation*}
\mathcal{C}_{\mathcal{E}}\colon\mathcal{E}\ni\theta\mapsto\mathcal{C}_{\mathcal{E}}(\theta)=T_{\theta}\mathcal{E}\cap\mathcal{C}_{\theta},
\end{equation*}
except at points $\theta\in\mathcal{E}$ where $\mathcal{C}_{\theta}\subset T_{\theta}\mathcal{E}$, which are called \textit{singular}. The distribution $\mathcal{C}_{\mathcal{E}}$ can therefore be given by a vector field
\begin{equation*}
X_{f}=\mathcal{D}+\mathcal{D}(f)\partial_{u_{k}},
\end{equation*}
and an integral curve $l\subset\mathcal{E}$ of the distribution $\mathcal{C}_{\mathcal{E}}$ is said to be a solution of the equation $\mathcal{E}$.

A transformation $\Phi\colon\mathcal{E}\to\mathcal{E}$ is called \textit{symmetry} of the equation $\mathcal{E}$ if it preserves the Cartan distribution $\mathcal{C}_{\mathcal{E}}$, i.e. $\Phi_{*}(\mathcal{C}_{\mathcal{E}})=\mathcal{C}_{\mathcal{E}}$. Infinitesimally, such a transformation is generated by a vector field $X\in D(\mathcal{E})$, such that $[X,\mathcal{C}_{\mathcal{E}}]\subset\mathcal{C}_{\mathcal{E}}$. Infinitesimal symmetries form a Lie algebra $\mathrm{Sym}(\mathcal{C}_{\mathcal{E}})$ with respect to the commutator of vector fields. Obviously, vector fields from the distribution $\mathcal{C}_{\mathcal{E}}$ are infinitesimal symmetries themselves. They are called \textit{trivial} or \textit{characteristic} since they transform any solution to the same solution and therefore do not give us new solutions. They form an ideal $\mathrm{Char}(\mathcal{C}_{\mathcal{E}})$ of the Lie algebra $\mathrm{Sym}(\mathcal{C}_{\mathcal{E}})$ and elements of a quotient algebra $\mathrm{Shuff}(\mathcal{C}_{\mathcal{E}})=\mathrm{Sym}(\mathcal{C}_{\mathcal{E}})/\mathrm{Char}(\mathcal{C}_{\mathcal{E}})$ are called \textit{shuffle symmetries}. Shuffle symmetries $X\in\mathrm{Shuff}(\mathcal{C}_{\mathcal{E}})$ are defined by means of \textit{generating functions}~\cite{VinKr,KLR}.
\begin{theorem}
Shuffle symmetries of the ODE $\mathcal{E}$ are of the form
\begin{equation*}
X_{\phi}=\sum\limits_{j=0}^{k-1}\overline{\mathcal{D}}^{j}(\phi)\partial_{u_{j}},
\end{equation*}
where $\overline{\mathcal{D}}=\partial_{x}+u_{1}\partial_{u_{0}}+\cdots+f\partial_{u_{k-1}}$ is an operator of a total derivative on $\mathcal{E}$ and $\phi\in C^{\infty}(\mathcal{E})$ is a generating function.

The generating function is found from the Lie equation
\begin{equation}
\label{LieEq}
\overline{\mathcal{D}}^{k}(\phi)-\sum\limits_{j=0}^{k-1}f_{u_{j}}\overline{\mathcal{D}}^{j}(\phi)=0.
\end{equation}
\end{theorem}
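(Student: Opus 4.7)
The plan is to unpack the infinitesimal symmetry condition $[X,\mathcal{C}_{\mathcal{E}}]\subset\mathcal{C}_{\mathcal{E}}$ directly, after choosing a convenient representative of the shuffle class. Since on $\mathcal{E}$ the distribution $\mathcal{C}_{\mathcal{E}}$ is one-dimensional and spanned by the total derivative $\overline{\mathcal{D}}$, a vector field $X\in D(\mathcal{E})$ is an infinitesimal symmetry if and only if $[X,\overline{\mathcal{D}}]=\lambda\,\overline{\mathcal{D}}$ for some $\lambda\in C^{\infty}(\mathcal{E})$.

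First I would write a generic vector field on $\mathcal{E}$ as $X=a\partial_x+\sum_{j=0}^{k-1}b_j\partial_{u_j}$ and observe that subtracting the characteristic field $a\,\overline{\mathcal{D}}\in\mathrm{Char}(\mathcal{C}_{\mathcal{E}})$ kills the $\partial_x$-component without changing the class in $\mathrm{Shuff}(\mathcal{C}_{\mathcal{E}})$. This is possible because the $\partial_x$-coefficient of $\overline{\mathcal{D}}$ equals $1$. Hence it suffices to work with a representative of the form $X=\sum_{j=0}^{k-1}b_j\partial_{u_j}$.

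Next, I would compute $[X,\overline{\mathcal{D}}]$ componentwise via the standard identity $[X,Y]^i=X(Y^i)-Y(X^i)$. The $\partial_x$-component vanishes automatically, which forces $\lambda=0$ and reduces the symmetry condition to $[X,\overline{\mathcal{D}}]=0$. Matching the $\partial_{u_i}$-components for $0\leq i\leq k-2$ yields the cascade $b_{i+1}=\overline{\mathcal{D}}(b_i)$, so setting $\phi:=b_0$ recovers $b_j=\overline{\mathcal{D}}^j(\phi)$ and the asserted form of $X_\phi$. The remaining $\partial_{u_{k-1}}$-component produces $\overline{\mathcal{D}}(b_{k-1})=\sum_{j=0}^{k-1}f_{u_j}\,b_j$, which upon substitution becomes exactly the Lie equation~\eqref{LieEq}.

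There is no deep obstacle; the argument is essentially bookkeeping with the commutator formula. The only point worth double-checking is that the ``gauge fixing'' $a=0$ is genuinely well-defined modulo $\mathrm{Char}(\mathcal{C}_{\mathcal{E}})$, which follows from the remark above about the $\partial_x$-coefficient of $\overline{\mathcal{D}}$. Conversely, a direct verification shows that every vector field of the prescribed shape with $\phi$ solving~\eqref{LieEq} satisfies $[X_\phi,\overline{\mathcal{D}}]=0$, so the correspondence between shuffle symmetries and solutions of the Lie equation is a bijection.
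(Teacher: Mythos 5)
Your argument is correct and complete. Note that the paper itself offers no proof of this statement: it is quoted as a known result with references to Vinogradov--Krasil'shchik and Kushner--Lychagin--Rubtsov, so there is no in-paper argument to compare against. What you wrote is essentially the standard derivation from those sources: gauge away the $\partial_x$-component by subtracting the characteristic field $a\overline{\mathcal{D}}$ (legitimate since $\mathrm{Char}(\mathcal{C}_{\mathcal{E}})$ consists exactly of the $C^{\infty}$-multiples of $\overline{\mathcal{D}}$, and the representative with vanishing $\partial_x$-coefficient is unique in each class), observe that the absence of a $\partial_x$-component in $[X,\overline{\mathcal{D}}]$ forces $\lambda=0$, and then read off the cascade $b_{j+1}=\overline{\mathcal{D}}(b_j)$ from the components $0\leq i\leq k-2$ and the Lie equation from the top component. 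Two small points you handled implicitly but correctly: (i) checking $[X,\overline{\mathcal{D}}]\in\mathcal{C}_{\mathcal{E}}$ suffices for $[X,\mathcal{C}_{\mathcal{E}}]\subset\mathcal{C}_{\mathcal{E}}$ because $[X,g\overline{\mathcal{D}}]=g[X,\overline{\mathcal{D}}]+X(g)\overline{\mathcal{D}}$; (ii) for an equation resolved with respect to the highest derivative, $f$ does not depend on $u_k$, so $X(f)=\sum_j b_j f_{u_j}$ with no extra term, which is what makes the top component reduce exactly to equation~(\ref{LieEq}). The converse direction (every $X_\phi$ with $\phi$ solving the Lie equation is a symmetry) is indeed immediate by running the same computation backwards, so the bijection claim stands.
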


From now and on by a symmetry of the equation $\mathcal{E}$ we shall mean a generating function $\phi$.

The Lie algebra structure in $\mathrm{Shuff}(\mathcal{C}_{\mathcal{E}})$ induces a Lie algebra structure on a space of generating functions by the following way:
\begin{equation*}
X_{[\phi_{1},\phi_{2}]}=[X_{\phi_{1}},X_{\phi_{2}}],
\end{equation*}
and an explicit expression for the bracket is
\begin{equation*}
[\phi_{1},\phi_{2}]=X_{\phi_{1}}(\phi_{2})-X_{\phi_{2}}(\phi_{1}).
\end{equation*}

\subsection{Schr\"{o}dinger equations}
Consider an ODE of the form
\begin{equation}
\label{SchEq}
y^{\prime\prime}+w(x)y=0,
\end{equation}
where $y(x)$ is an unknown function and $w(x)$ is a potential. Equation (\ref{SchEq}) defines a smooth submanifold
\begin{equation}
\label{eqE}
\mathcal{E}=\left\{u_{2}=-w(x)u_{0}\right\}\subset J^{2}(x,u_{0},u_{1},u_{2}).
\end{equation}
We will be interested in linear symmetries of (\ref{eqE}):
\begin{equation}
\label{sym}
\phi=a(x)u_{0}+b(x)u_{1},
\end{equation}
where $a(x)$ and $b(x)$ are some functions. Substituting (\ref{sym}) into (\ref{LieEq}) we get (see also~\cite{LychLych})
\begin{equation*}
\phi=cu_{0}+\phi_{z},
\end{equation*}
where $c$ is a constant,
\begin{equation*}
\phi_{z}=z(x)u_{1}-\frac{z^{\prime}(x)u_{0}}{2},
\end{equation*}
and the function $z(x)$ satisfies the \textit{Lie equation}:
\begin{equation}
\label{lieZ}
z^{\prime\prime\prime}+4wz^{\prime}+2w^{\prime}z=0.
\end{equation}
Thus equation (\ref{eqE}) has two commuting symmetries $\phi_{1}=u_{0}$ and $\phi_{2}=\phi_{z}$ and having a solution of (\ref{lieZ}) for a given potential $w(x)$ one can therefore integrate (\ref{eqE}) using the Lie-Bianchi theorem~\cite{VinKr,KLR}. We will call such potentials \textit{integrable}.

Introduce the operators
\begin{equation*}
S_{w}=\partial^{2}+w,\quad L_{w}=\partial^{3}+4w\partial+2w^{\prime}
\end{equation*}
corresponding to equations (\ref{SchEq}) and (\ref{lieZ}) respectively, and let $\mathrm{Sol}(w)$ and $\mathrm{L}(w)$ be solution spaces of Schr\"{o}dinger equation (\ref{SchEq}) and Lie equation (\ref{lieZ}) respectively, i.e.
\begin{eqnarray*}
\mathrm{Sol}(w)&=&\left\{y\mid S_{w}(y)=0\right\}\\
\mathrm{L}(w)&=&\left\{z\mid L_{w}(z)=0\right\}.
\end{eqnarray*}
Note that there is a correspondence between $\mathrm{Sol}(w)$ and $\mathrm{L}(w)$. Namely, if $y\in\mathrm{Sol}(w)$, then $z=y^{2}\in\mathrm{L}(w)$ and $\mathrm{L}(w)$ is therefore a symmetric square of $\mathrm{Sol}(w)$, $\mathrm{L}(w)=\mathbf{S}^{2}(\mathrm{Sol}(w))$. Moreover, $\mathrm{L}(w)=\mathfrak{sl}_{2}(\mathbb{R})$ with a bracket
\begin{equation*}
[z_{1},z_{2}]=z_{1}^{\prime}z_{2}-z_{1}z_{2}^{\prime}.
\end{equation*}
Indeed, if $y_{1},\,y_{2}\in\mathrm{Sol}(w)$ is a fundamental solution of (\ref{SchEq}), then solutions $A=y_{1}^{2}$, $B=y_{2}^{2}$, $C=2y_{1}y_{2}$ of (\ref{lieZ}) satisfy $\mathfrak{sl}_{2}(\mathbb{R})$ structure equations:
\begin{equation*}
[A,B]=C,\quad [C,A]=-2A,\quad [C,B]=2B.
\end{equation*}

Let us consider equation (\ref{lieZ}) as an equation for $w(x)$. The following theorem is valid.
\begin{theorem}
The symmetry function $z(x)$ and potential $w(x)$ are related as
\begin{equation}
\label{potsym}
w(x)=\frac{c_{w}}{z^{2}}+\frac{1}{4}\left(\frac{z^{\prime}}{z}\right)^{2}-\frac{z^{\prime\prime}}{2z},
\end{equation}
where $c_{w}$ is a constant,
\begin{equation*}
c_{w}=\frac{1}{4}K(z,z),
\end{equation*}
where $K$ is the Killing form of the Lie algebra $\mathrm{L}(w)$.

Let $\widehat{w}(x)$ be another potential with the same symmetry $z(x)$. Then,
\begin{equation*}
\widehat{w}=w+\frac{\widehat{c}}{z^{2}},
\end{equation*}
where $\widehat{c}=c_{\widehat{w}}-c_{w}$ is a constant.
\end{theorem}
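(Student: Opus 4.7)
The central observation is that equation (\ref{lieZ}), although third-order in $z$, is only first-order linear in $w$, so it can be integrated in closed form once $z$ is regarded as known. My plan is therefore: (i) integrate (\ref{lieZ}) for $w$ to obtain (\ref{potsym}) with an integration constant $c_{w}$; (ii) identify $c_{w}$ with $\tfrac{1}{4}K(z,z)$ via a uniqueness argument on the simple Lie algebra $\mathrm{L}(w)\simeq\mathfrak{sl}_{2}(\mathbb{R})$; (iii) derive the last assertion by subtracting the Lie operators $L_{w}$ and $L_{\widehat{w}}$.

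For step (i), the key identity is
\begin{equation*}
2w'z + 4wz' = \frac{2(wz^{2})'}{z},
\end{equation*}
which recasts (\ref{lieZ}) as $(wz^{2})' = -\tfrac{1}{2}zz'''$. A single integration by parts, $\int zz'''\,dx = zz'' - \tfrac{1}{2}(z')^{2}$, yields $wz^{2} = -\tfrac{1}{2}zz'' + \tfrac{1}{4}(z')^{2} + c_{w}$; dividing by $z^{2}$ gives exactly (\ref{potsym}).

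For step (ii), I would polarize the quadratic functional $z\mapsto c_{w}(z) = wz^{2} + \tfrac{1}{2}zz'' - \tfrac{1}{4}(z')^{2}$ to a symmetric bilinear form $B$ on $\mathrm{L}(w)$, then verify two properties. First, that $B(z_{1},z_{2})$ is constant whenever $z_{1},z_{2}\in\mathrm{L}(w)$---this can be done either by differentiating and using $L_{w}(z_{i})=0$ directly, or by evaluating on the basis $A=y_{1}^{2}$, $B=y_{2}^{2}$, $C=2y_{1}y_{2}$ and invoking $S_{w}(y_{i})=0$ together with the constancy of the Wronskian. Second, that $B$ is $\mathrm{ad}$-invariant for the bracket $[z_{1},z_{2}]=z_{1}'z_{2}-z_{1}z_{2}'$. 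Since $\mathfrak{sl}_{2}(\mathbb{R})$ is simple, the space of invariant symmetric bilinear forms is one-dimensional, so $B$ is a scalar multiple of $K$, and this multiple is pinned down by evaluating both sides on a single element (most conveniently $C$). The main technical obstacle I anticipate is this invariance check; the preceding integration and the subsequent comparison on one element are routine.

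Step (iii) is then immediate: if $z$ satisfies $L_{w}(z)=0$ and $L_{\widehat{w}}(z)=0$, subtraction gives $4(\widehat{w}-w)z' + 2(\widehat{w}-w)'z = 0$, i.e.\ $((\widehat{w}-w)z^{2})' = 0$, so $\widehat{w}-w = \widehat{c}/z^{2}$ for some constant $\widehat{c}$; inserting this into (\ref{potsym}) applied to $w$ and to $\widehat{w}$ identifies $\widehat{c} = c_{\widehat{w}} - c_{w}$, completing the proof.
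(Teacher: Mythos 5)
The paper states this theorem without proof, so your write-up supplies an argument the authors omit; judged on its own terms it is essentially correct. Step (i) is right: multiplying (\ref{lieZ}) by $z/2$ gives $\tfrac{1}{2}zz'''+(wz^{2})'=0$, and since $\tfrac{d}{dx}\left(zz''-\tfrac{1}{2}(z')^{2}\right)=zz'''$ you obtain $wz^{2}+\tfrac{1}{2}zz''-\tfrac{1}{4}(z')^{2}=c_{w}$, which is (\ref{potsym}). Step (iii) is also correct and is the cleanest part of the argument. For step (ii), the strategy (polarize, check constancy and $\mathrm{ad}$-invariance, invoke one-dimensionality of the space of invariant symmetric bilinear forms on the simple algebra $\mathfrak{sl}_{2}(\mathbb{R})$, fix the scalar on one element) is sound; note that constancy of the polarized form is automatic once $c_{w}(z)$ is known to be constant on the linear space $\mathrm{L}(w)$, so the only genuine work is the invariance check, which you can do on the basis $A=y_{1}^{2}$, $B=y_{2}^{2}$, $C=2y_{1}y_{2}$. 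One warning: if you actually carry out your normalization step on $C$ with the paper's basis (whose structure equations force the Wronskian $y_{1}'y_{2}-y_{1}y_{2}'=\pm1$), you get $c_{w}(C)=-(y_{1}'y_{2}-y_{1}y_{2}')^{2}=-1$, while the standard Killing form $K(x,y)=\mathrm{tr}(\mathrm{ad}\,x\circ\mathrm{ad}\,y)$ gives $K(C,C)=8$; hence $c_{w}=-\tfrac{1}{8}K(z,z)$ rather than $\tfrac{1}{4}K(z,z)$. So your method establishes the substantive claim---that $c_{w}$ is a constant multiple of the invariant quadratic form, hence constant and basis-independent---but the coefficient $\tfrac{1}{4}$ in the statement only emerges under a nonstandard normalization of $K$; be explicit about which normalization you adopt when you pin down the scalar, since as written your step (ii) would terminate in a contradiction with the literal statement.
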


Let us now get solutions to (\ref{SchEq}) having known the symmetry $z(x)$.
\begin{lemma}
Function $H=\phi_{1}\overline{\mathcal{D}}(\phi_{2})-\phi_{2}\overline{\mathcal{D}}(\phi_{1})$ is the first integral of (\ref{SchEq}) for any symmetries $\phi_{1}$ and $\phi_{2}$.
\end{lemma}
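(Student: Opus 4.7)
The plan is to show directly that $\overline{\mathcal{D}}(H)=0$ on $\mathcal{E}$, since being a first integral of an ODE, in the geometric setup above, means being annihilated by the restricted total derivative $\overline{\mathcal{D}}$ that spans $\mathcal{C}_{\mathcal{E}}$.

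First I would apply the Leibniz rule to
\begin{equation*}
\overline{\mathcal{D}}(H)=\overline{\mathcal{D}}(\phi_{1})\,\overline{\mathcal{D}}(\phi_{2})+\phi_{1}\,\overline{\mathcal{D}}^{2}(\phi_{2})-\overline{\mathcal{D}}(\phi_{2})\,\overline{\mathcal{D}}(\phi_{1})-\phi_{2}\,\overline{\mathcal{D}}^{2}(\phi_{1}).
\end{equation*}
The cross terms $\overline{\mathcal{D}}(\phi_{1})\,\overline{\mathcal{D}}(\phi_{2})$ cancel, leaving
\begin{equation*}
\overline{\mathcal{D}}(H)=\phi_{1}\,\overline{\mathcal{D}}^{2}(\phi_{2})-\phi_{2}\,\overline{\mathcal{D}}^{2}(\phi_{1}).
\end{equation*}

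Next I would use the Lie equation (\ref{LieEq}) specialized to our setting. The equation $\mathcal{E}$ is defined by $u_{2}=-w(x)u_{0}$, so $f=-wu_{0}$, $f_{u_{0}}=-w$, $f_{u_{1}}=0$, and (\ref{LieEq}) becomes $\overline{\mathcal{D}}^{2}(\phi)+w\phi=0$ for any symmetry $\phi$. Substituting this identity for both $\phi_{1}$ and $\phi_{2}$ gives
\begin{equation*}
\overline{\mathcal{D}}(H)=\phi_{1}(-w\phi_{2})-\phi_{2}(-w\phi_{1})=0,
\end{equation*}
which is what we needed.

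There is no real obstacle here: the statement is essentially a two-line consequence of the Leibniz rule together with the second-order Lie equation, and the symmetry (antisymmetry in $\phi_{1}\leftrightarrow\phi_{2}$) of $H$ is exactly what makes the unwanted terms cancel. The only point worth being careful about is distinguishing $\overline{\mathcal{D}}$, which is the total derivative intrinsic to $\mathcal{E}$ (already incorporating $u_{2}=-wu_{0}$), from the ambient $\mathcal{D}$ on $J^{2}(\mathbb{R})$; the identity $\overline{\mathcal{D}}^{2}(\phi)=-w\phi$ only holds after this restriction.
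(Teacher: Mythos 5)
Your argument is correct and is exactly the paper's proof: apply the Leibniz rule so the cross terms cancel, then use the specialized Lie equation $\overline{\mathcal{D}}^{2}(\phi)=-w\phi$ for both symmetries to conclude $\overline{\mathcal{D}}(H)=0$. The paper merely states this more tersely; your remark about working with the restricted total derivative $\overline{\mathcal{D}}$ rather than the ambient $\mathcal{D}$ is a sensible clarification but not a difference in method.
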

\begin{proof}
Since $\phi_{1}$ and $\phi_{2}$ are symmetries, $\overline{\mathcal{D}}^{2}(\phi_{1,2})=-w(x)\phi_{1,2}$ due to (\ref{LieEq}).
\begin{equation*}
\overline{\mathcal{D}}(H)=\phi_{1}\overline{\mathcal{D}}^{2}(\phi_{2})-\phi_{2}\overline{\mathcal{D}}^{2}(\phi_{1})=0.
\end{equation*}
\end{proof}

Applying the result of the above lemma to $\phi_{1}=\phi_{z}$ and $\phi_{2}=u_{0}$, we get
\begin{equation*}
H=\frac{c_{w}}{z}(u_{0})^{2}+\frac{1}{z}(\phi_{z})^{2}.
\end{equation*}
Introducing a new variable $v=u_{0}/\sqrt{|z|}$, we get $(v^{\prime})^{2}=z^{-3}(\phi_{z})^{2}$ and
\begin{equation*}
H=c_{w}v^{2}+z^{2}\left(v^{\prime}\right)^{2}=H_{0}^{2}
\end{equation*}
for some constant $H_{0}>0$. Consider three cases.
\begin{itemize}
\item Elliptic case, $c_{w}=q_{0}^{2}>0$.

Introduce a new variable $\psi$ by the following way:
\begin{equation*}
v=\frac{H_{0}}{q_{0}}\sin\psi,\quad v^{\prime}=\frac{H_{0}}{z}\cos\psi.
\end{equation*}
The last implies that
\begin{equation*}
\psi=\int\frac{q_{0}}{z}dx,\quad y=\frac{H_{0}}{q_{0}}\sqrt{|z|}\sin\left(q_{0}\int\frac{dx}{z}\right).
\end{equation*}
\item Hyperbolic case, $c_{w}=-q_{0}^{2}<0$

In the same way we obtain
\begin{equation*}
y=\frac{H_{0}}{q_{0}}\sqrt{|z|}\sinh\left(q_{0}\int\frac{dx}{z}\right).
\end{equation*}
\item Parabolic case, $c_{w}=0$
\begin{equation*}
y=H_{0}\sqrt{|z|}\int\frac{dx}{z}.
\end{equation*}
\end{itemize}
Summarizing above discussion, we have the following theorem.
\begin{theorem}
\label{thm1}
Let $z(x)$ be a nonzero symmetry of (\ref{SchEq}). Then, a fundamental solution of (\ref{SchEq}) is given as
\begin{itemize}
\item for $c_{w}=q_{0}^{2}>0$
\begin{equation}
\label{FundSolEll}
y^{(1)}(x)=\sqrt{|z|}\sin\left(q_{0}\int\frac{dx}{z}\right),\quad y^{(2)}(x)=\sqrt{|z|}\cos\left(q_{0}\int\frac{dx}{z}\right).
\end{equation}
\item for $c_{w}=-q_{0}^{2}<0$
\begin{equation}
\label{FundSolHyp}
y^{(1)}(x)=\sqrt{|z|}\sinh\left(q_{0}\int\frac{dx}{z}\right),\quad y^{(2)}(x)=\sqrt{|z|}\cosh\left(q_{0}\int\frac{dx}{z}\right).
\end{equation}
\item for $c_{w}=0$
\begin{equation}
\label{FundSolPar}
y^{(1)}(x)=\sqrt{|z|}\int\frac{dx}{z},\quad y^{(2)}(x)=\sqrt{|z|}.
\end{equation}
\end{itemize}
A fundamental solution of (\ref{lieZ}) is given as
\begin{itemize}
\item for $c_{w}=q_{0}^{2}>0$
\begin{equation}
\label{zFundSolEll}
z^{(1)}(x)=z,\, z^{(2)}(x)=z\sin\left(2q_{0}\int\frac{dx}{z}\right),\, z^{(3)}(x)=z\cos\left(2q_{0}\int\frac{dx}{z}\right).
\end{equation}
\item for $c_{w}=-q_{0}^{2}<0$
\begin{equation}
\label{zFundSolHyp}
z^{(1)}(x)=z,\, z^{(2)}(x)=z\sinh\left(2q_{0}\int\frac{dx}{z}\right),\, z^{(3)}(x)=z\cosh\left(2q_{0}\int\frac{dx}{z}\right).
\end{equation}
\item for $c_{w}=0$
\begin{equation}
\label{zFundSolPar}
z^{(1)}(x)=z,\, z^{(2)}(x)=z\left(\int\frac{dx}{z}\right)^{2},\, z^{(3)}(x)=z\int\frac{dx}{z}.
\end{equation}
\end{itemize}
\end{theorem}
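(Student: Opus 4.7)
The plan is to use the identity $H=c_w v^2+z^2(v')^2=H_0^2$ that the excerpt has already derived from the lemma on first integrals, and to treat it as a separable first-order ODE for $v$ in each of the three cases. The derivation above already produces one explicit solution in each case by the parameterizations $v=(H_0/q_0)\sin\psi$, $v=(H_0/q_0)\sinh\psi$, or $v=H_0\int dx/z$. What remains is to produce the second fundamental solution $y^{(2)}$ and verify independence. I will obtain $y^{(2)}$ simply by applying the same parameterization with a $\pi/2$ phase shift in the elliptic case (yielding the cosine partner), with the analogous shift in the hyperbolic case (yielding the cosh partner), and by taking the other branch $v=\mathrm{const}$ in the parabolic case (yielding $y^{(2)}=\sqrt{|z|}$). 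The overall multiplicative constants $H_0/q_0$ and $H_0$ in the formulas are irrelevant, since any nonzero scalar multiple of a fundamental solution is itself fundamental.

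Linear independence of the pair $(y^{(1)},y^{(2)})$ will be checked by a direct Wronskian computation. In the elliptic case, differentiating gives
\[
W(y^{(1)},y^{(2)})=y^{(1)}(y^{(2)})'-(y^{(1)})'y^{(2)}=-q_0,
\]
after using $(\sqrt{|z|}\,)'=z'/(2\sqrt{|z|})$ and $\sin^2+\cos^2=1$; the hyperbolic and parabolic cases are analogous, giving $-q_0$ and $-1$ respectively. Because the Wronskian is a nonzero constant, the pair spans $\mathrm{Sol}(w)$ whenever $z$ does not vanish.

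For the Lie-equation part, I will invoke the isomorphism $\mathrm{L}(w)=\mathbf{S}^2(\mathrm{Sol}(w))$ established earlier in the excerpt: the triple
\[
A=(y^{(1)})^2,\quad B=(y^{(2)})^2,\quad C=2y^{(1)}y^{(2)}
\]
is a basis of $\mathrm{L}(w)$. In the elliptic case, using $2\sin^2\alpha=1-\cos(2\alpha)$, $2\cos^2\alpha=1+\cos(2\alpha)$, and $2\sin\alpha\cos\alpha=\sin(2\alpha)$ with $\alpha=q_0\int dx/z$, the triple $(A,B,C)$ expresses as an invertible linear combination of the claimed triple $(z,\,z\sin(2q_0\int dx/z),\,z\cos(2q_0\int dx/z))$; hence the latter is also a basis. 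The hyperbolic case uses $2\sinh^2=\cosh(2\cdot)-1$ and $2\sinh\cosh=\sinh(2\cdot)$ in the same way. In the parabolic case, $A=|z|(\int dx/z)^2$, $B=|z|$, $C=2|z|\int dx/z$ directly recover the stated triple up to sign.

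The main obstacle is bookkeeping around the absolute values and the sign of $z$: the substitution $v=u_0/\sqrt{|z|}$ is valid only on intervals where $z$ does not change sign, so the proof is strictly local and the global structure across zeros of $z$ must be addressed by analytic continuation. A secondary subtlety is that the elliptic/hyperbolic parameterization $v=(H_0/q_0)\sin\psi$ forces $|v|\le H_0/|q_0|$, which is consistent with $c_w v^2\le H_0^2$ only on the branch where $H_0^2-c_w v^2\ge 0$; this is why the parabolic case, where the parameterization degenerates, must be handled separately by direct integration of $z^2(v')^2=H_0^2$, yielding the two distinguished solutions $v=H_0\int dx/z+C$ of the reduced ODE.
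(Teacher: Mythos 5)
Your proposal is correct and follows essentially the same route as the paper: the Schr\"{o}dinger part comes from the first integral $H=c_{w}v^{2}+z^{2}(v^{\prime})^{2}=H_{0}^{2}$ with the substitution $v=u_{0}/\sqrt{|z|}$ and the trigonometric/hyperbolic parameterizations, and the Lie-equation part from the symmetric-square relation $\mathrm{L}(w)=\mathbf{S}^{2}(\mathrm{Sol}(w))$. The Wronskian check, the double-angle identities, and the remarks on sign/locality of $\sqrt{|z|}$ are details the paper leaves implicit, but they do not change the argument.
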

Theorem~\ref{thm1} gives us a method of constructing integrable potentials by the following way.
\begin{enumerate}
\item Given a pair $(z,w)$
\item Get a fundamental solution to (\ref{SchEq}) by means of (\ref{FundSolEll}), or (\ref{FundSolHyp}), or (\ref{FundSolPar})
\item Get a fundamental solution to (\ref{lieZ}) by means of (\ref{zFundSolEll}), or (\ref{zFundSolHyp}), or (\ref{zFundSolPar})
\item Get a three-parametric family of integrable potentials (for example, in elliptic case)
\begin{equation*}
\widehat{w}=w+\frac{\widehat{c}}{z^{2}}\left(\alpha_{1}+\alpha_{2}\sin\left(2q_{0}\int\frac{dx}{z}\right)+\alpha_{3}\cos\left(2q_{0}\int\frac{dx}{z}\right)\right)^{-2}
\end{equation*}
with new symmetries
\begin{equation*}
\widehat{z}=z\left(\alpha_{1}+\alpha_{2}\sin\left(2q_{0}\int\frac{dx}{z}\right)+\alpha_{3}\cos\left(2q_{0}\int\frac{dx}{z}\right)\right),
\end{equation*}
where $\alpha_{1}$, $\alpha_{2}$, $\alpha_{3}$ are constants.
\item Again, have a pair $(\widehat{z},\widehat{w})$ and go to step 1.
\end{enumerate}
It is worth to mention that symmetry $z(x)$ not only allows to get solutions to the Schr\"{o}dinger equation, but also produces an infinite hierarchy of integrable potentials and solutions for them.
\subsubsection{Eigenvalue problem}
Consider the Schr\"{o}dinger equation with potential $w(x)-\lambda$ and let
\begin{equation}
\label{dir-cond}
y(a)=y(b)=0,\quad a,b\in\mathbb{R}
\end{equation}
be the Dirichlet boundary conditions.
\begin{theorem}
\label{thm-eig}
Let potentials $w(x)-\lambda$ be integrable and let $z(x,\lambda)$ be their non-trivial symmetries. Then, eigenvalues $\lambda$ of Dirichlet boundary problem (\ref{dir-cond}) for Schrodinger equation (\ref{SchEq}) are solutions of the equation
\begin{equation*}
y^{(1)}(a,\lambda)y^{(2)}(b,\lambda)-y^{(1)}(b,\lambda)y^{(2)}(a,\lambda)=0,
\end{equation*}
where $y^{(1)}$ and $y^{(2)}$ are defined by (\ref{FundSolEll}) in elliptic case, by (\ref{FundSolHyp}) in hyperbolic case, and by (\ref{FundSolPar}) in parabolic case.
\end{theorem}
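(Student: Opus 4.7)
The plan is to reduce the Dirichlet eigenvalue problem to the vanishing of a $2\times 2$ determinant in the standard way. For each admissible $\lambda$, the equation with potential $w(x)-\lambda$ is of the form (\ref{SchEq}) with $w$ replaced by $w-\lambda$; by hypothesis $z(x,\lambda)$ is a non-trivial symmetry of this equation, so Theorem~\ref{thm1} applies and yields an explicit fundamental pair $y^{(1)}(x,\lambda),\,y^{(2)}(x,\lambda)$, selected from (\ref{FundSolEll}), (\ref{FundSolHyp}), or (\ref{FundSolPar}) according to the sign of $c_{w-\lambda}$.

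Next, I would invoke linearity: every solution of (\ref{SchEq}) with potential $w-\lambda$ can be written as $y(x,\lambda)=c_{1}y^{(1)}(x,\lambda)+c_{2}y^{(2)}(x,\lambda)$. Substituting the Dirichlet conditions (\ref{dir-cond}) produces the homogeneous $2\times 2$ linear system
\[
\begin{pmatrix} y^{(1)}(a,\lambda) & y^{(2)}(a,\lambda) \\ y^{(1)}(b,\lambda) & y^{(2)}(b,\lambda) \end{pmatrix}\begin{pmatrix} c_{1} \\ c_{2} \end{pmatrix}=\begin{pmatrix} 0 \\ 0 \end{pmatrix}.
\]
An eigenfunction is by definition non-trivial, so $(c_{1},c_{2})\neq(0,0)$ is required. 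By Cramer's rule this happens precisely when the coefficient matrix is singular, that is, when $y^{(1)}(a,\lambda)y^{(2)}(b,\lambda)-y^{(1)}(b,\lambda)y^{(2)}(a,\lambda)=0$, which is exactly the asserted condition.

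The one auxiliary fact that has to be checked is that $\{y^{(1)},y^{(2)}\}$ really spans the solution space, so that the ansatz above captures every solution. I would verify this directly from the explicit formulas of Theorem~\ref{thm1}: setting $\phi=q_{0}\int dx/z$ in the elliptic case and differentiating, a short trigonometric simplification collapses the Wronskian $W(y^{(1)},y^{(2)})$ to a nonzero constant multiple of $q_{0}$, and analogous computations cover the hyperbolic and parabolic cases (with the parabolic Wronskian being a nonzero constant independent of $x$). This is routine and is not a real obstacle; the substantive content of the theorem is simply that the closed-form fundamental solutions provided by Theorem~\ref{thm1} turn the Dirichlet spectral problem into a single transcendental equation in $\lambda$.
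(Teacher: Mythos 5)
Your proposal is correct and follows essentially the same route as the paper: invoke Theorem~\ref{thm1} to write the general solution as $C_{1}y^{(1)}+C_{2}y^{(2)}$, impose the Dirichlet conditions to get a homogeneous $2\times2$ system, and require the determinant to vanish for a non-trivial eigenfunction. Your additional check that the Wronskian of the explicit pair is a nonzero constant (so the pair genuinely spans the solution space) is a reasonable extra verification that the paper leaves implicit by calling them a fundamental solution.
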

\begin{proof}
If $z(x,\lambda)$ is a non-trivial symmetry for potential $w(x)-\lambda$, then using theorem~\ref{thm1} we get a general solution to (\ref{SchEq}) in the form
\begin{equation*}
y(x)=C_{1}y^{(1)}(x,\lambda)+C_{2}y^{(2)}(x,\lambda),
\end{equation*}
where $y^{(1)}(x,\lambda)$ and $y^{(2)}(x,\lambda)$ are defined by means of (\ref{FundSolEll}) in elliptic case, by (\ref{FundSolHyp}) in hyperbolic case, and by (\ref{FundSolPar}) in parabolic case. Boundary conditions (\ref{dir-cond}) lead us to the homogeneous linear system for $C_{1}$ and $C_{2}$:
\begin{equation*}
\begin{cases}
  C_{1}y^{(1)}(a,\lambda)+C_{2}y^{(2)}(a,\lambda)=0,
   \\
  C_{1}y^{(1)}(b,\lambda)+C_{2}y^{(2)}(b,\lambda)=0.
 \end{cases}
\end{equation*}
Non-trivial solutions exist if the determinant of this system is equal to zero:
\begin{equation*}
y^{(1)}(a,\lambda)y^{(2)}(b,\lambda)-y^{(1)}(b,\lambda)y^{(2)}(a,\lambda)=0.
\end{equation*}
\end{proof}

Applying the algorithm of generating integrable potentials described above and using results of theorem \ref{thm-eig}, we get a series of potentials for which the eigenvalue problem admits explicit solution.

\begin{example}[Mexican hat]
Consider the eigenvalue problem for the so-called Mexican hat potential:
\begin{equation*}
w(x)=\frac{9\nu^{6}}{4}x^{4}-3\delta x^{2},
\end{equation*}
where $\nu$ and $\delta$ are positive constants. Solving Lie equation (\ref{lieZ}) we get that the symmetry function $z(x,\lambda)$ for this case is
\begin{equation}
\label{symhat}
z(x,\lambda)=\left|\mathrm{HeunT}\left(\frac{\lambda \nu^{6}+\delta^{2}}{\nu^{8}},0,\frac{2\delta}{\nu^{4}},i\nu x\right)\right|^{2},
\end{equation}
where $\mathrm{HeunT}(\alpha,\beta,\gamma,\mathrm{z})$ is the Heun triconfluent function~\cite{SL}, $\mathrm{z}\in\mathbb{C}$, $i$ is an imaginary unit.

Since $\mathrm{L}(w)=\mathbf{S}^{2}(\mathrm{Sol}(w))$, the function $z\in\mathrm{L}(w)$ can be considered as the probability density of the particle for potential  $w(x)$. The probability density $z(x)$ and Mexican hat potential are shown in figure~\ref{mexhat}.

\begin{figure}[h!]
\centering
\includegraphics[scale=.35]{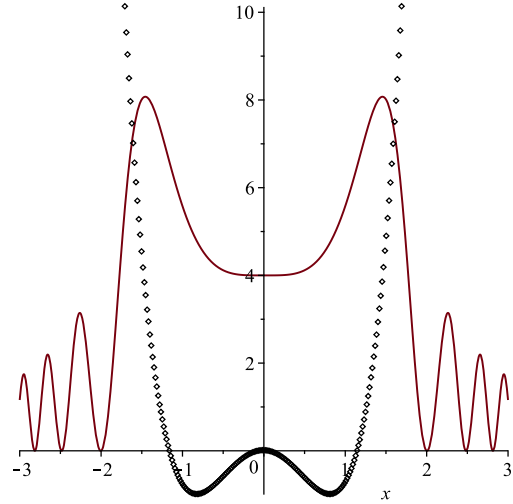}
\caption{Mexican hat potential (point style) and probability density (line style) for the particle with zero energy, $\lambda=0$}
\label{mexhat}       
\end{figure}

Let us now consider the spectral problem for boundary conditions
\begin{equation*}
y(-2)=y(2)=0.
\end{equation*}

Due to theorems \ref{thm1} and \ref{thm-eig}, we get the following equation for $\lambda$:
\begin{equation*}
\mathrm{Im}\left(\exp\left(\frac{-4i(2\nu^{6}-\delta)}{\nu^{3}}\right)\mathrm{HeunT}^{2}\left(\frac{\lambda\nu^{6}+\delta^{2}}{\nu^{8}},0,\frac{2\delta}{\nu^{4}},-2i\nu\right)\right)=0,
\end{equation*}
where $\mathrm{Im}$ is an imaginary part, and the graph of its left-hand side in case of $\nu=\delta=1$ is shown in figure \ref{spec}. One can see that only negative eigenvalues are possible.

\begin{figure}[h!]
\centering
\includegraphics[scale=.35]{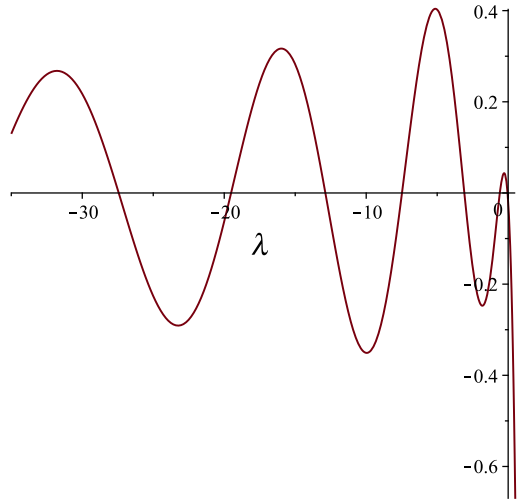}
\caption{Eigenvalues}
\label{spec}       
\end{figure}

Distributions of the density for corresponding eigenvalues $\lambda_{n}$ are shown in figure \ref{dens-mex}.

\begin{figure}[ht!]
\centering
\subfigure[]{\includegraphics[width=0.4\linewidth]{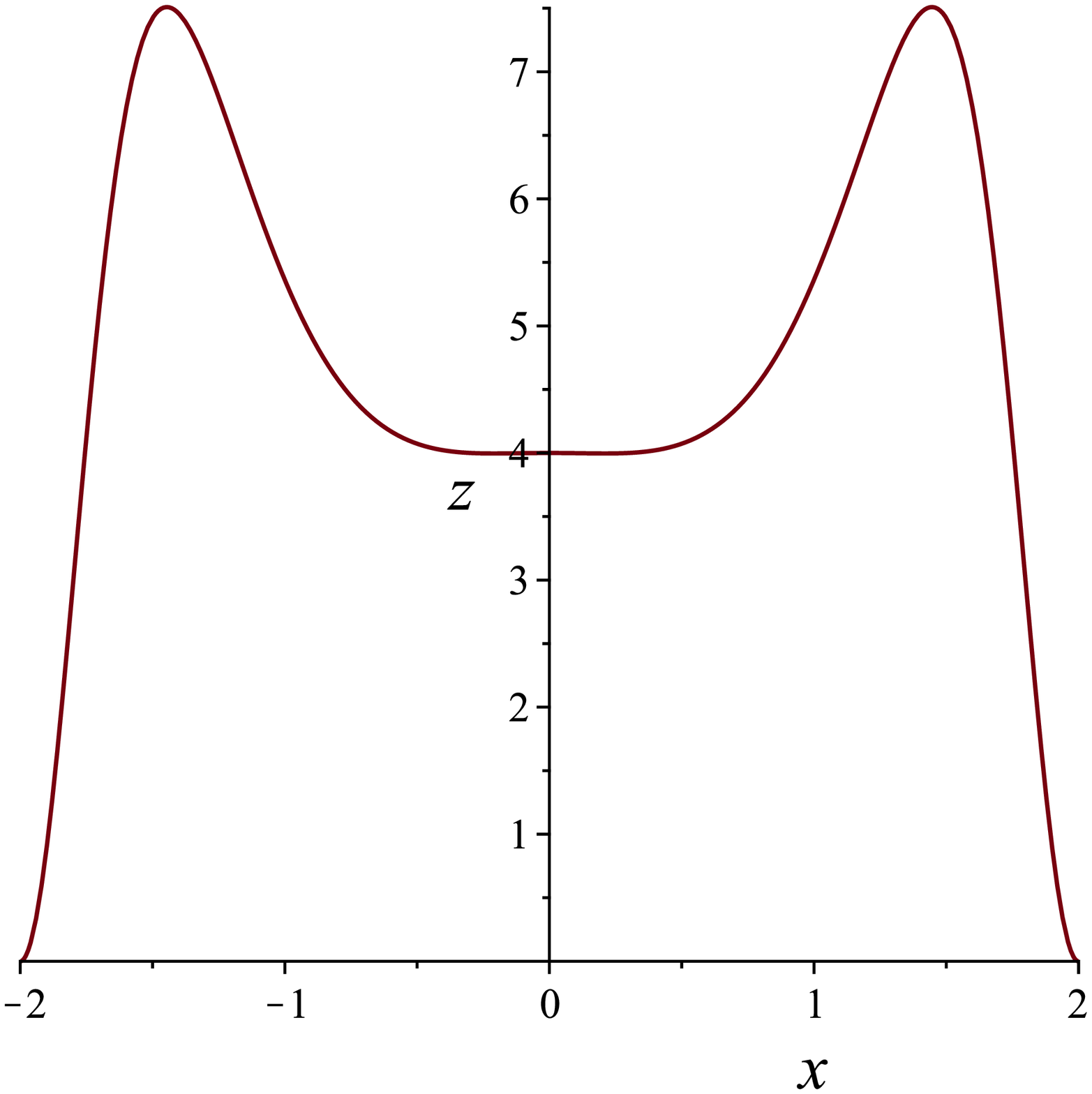} \label{lam1} }
\hspace{4ex}
\subfigure[]{ \includegraphics[width=0.4\linewidth]{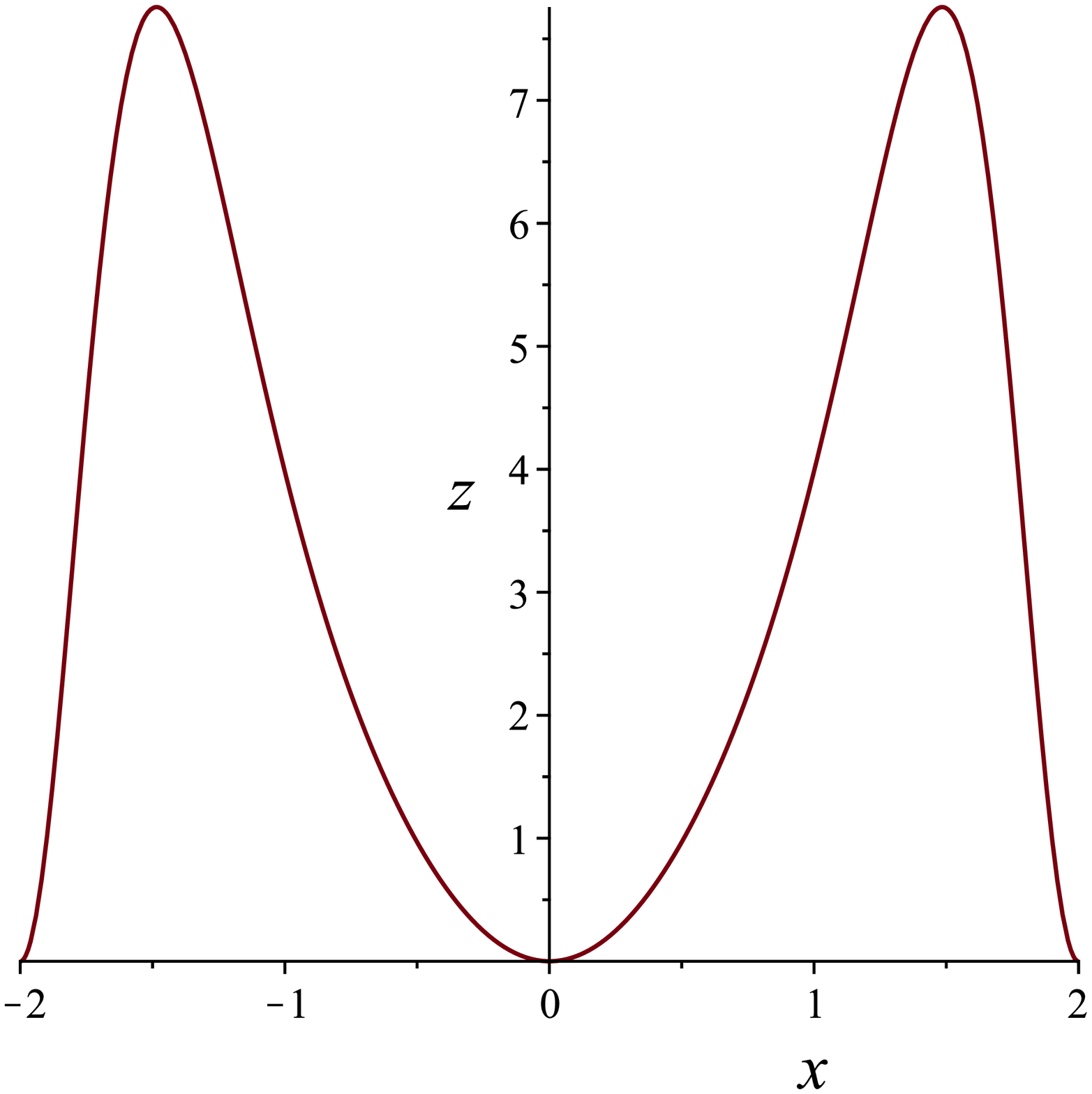} \label{lam2} }
\hspace{4ex}
\subfigure[]{ \includegraphics[width=0.4\linewidth]{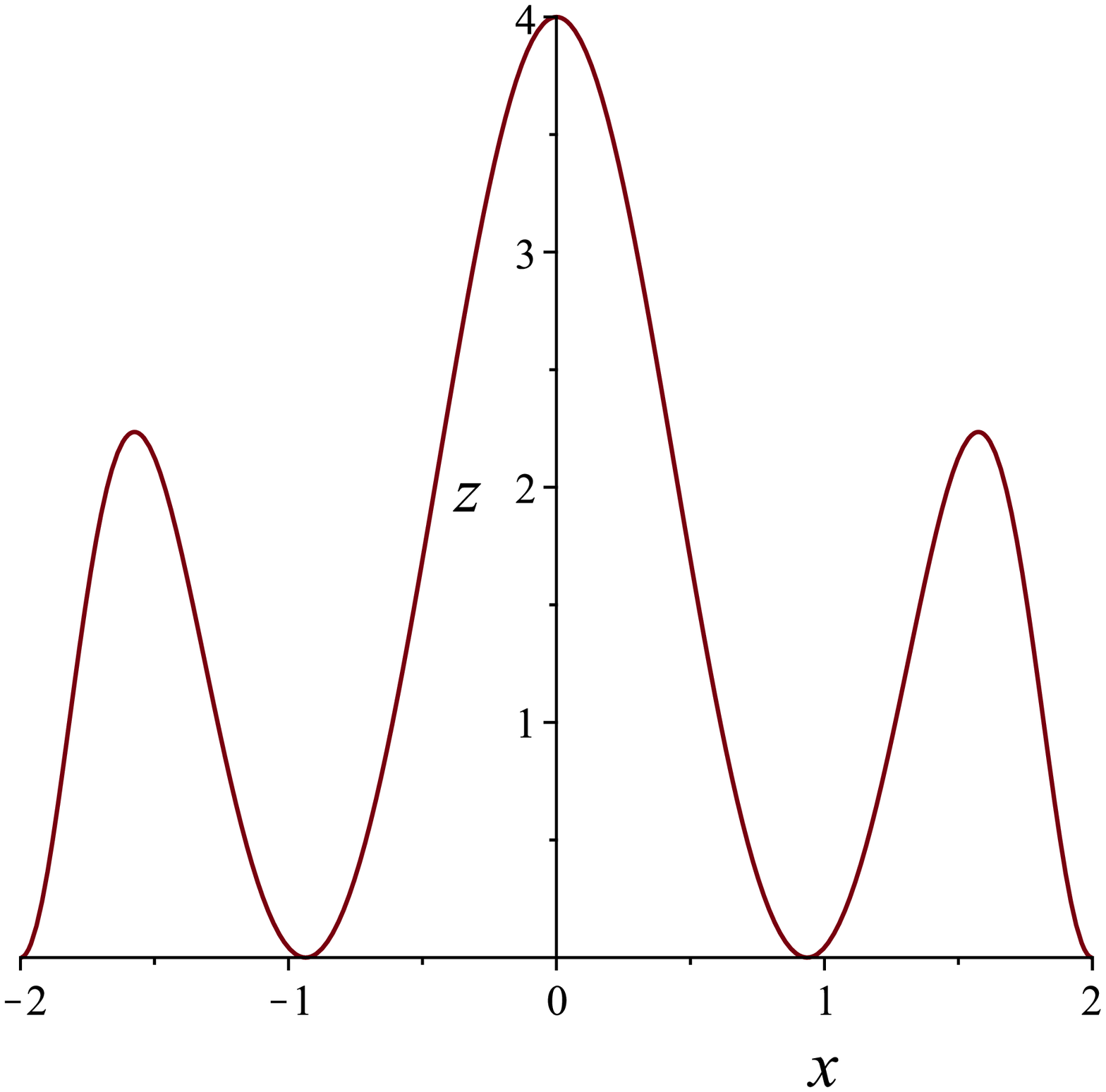} \label{lam3} }
\hspace{4ex}
\subfigure[]{ \includegraphics[width=0.4\linewidth]{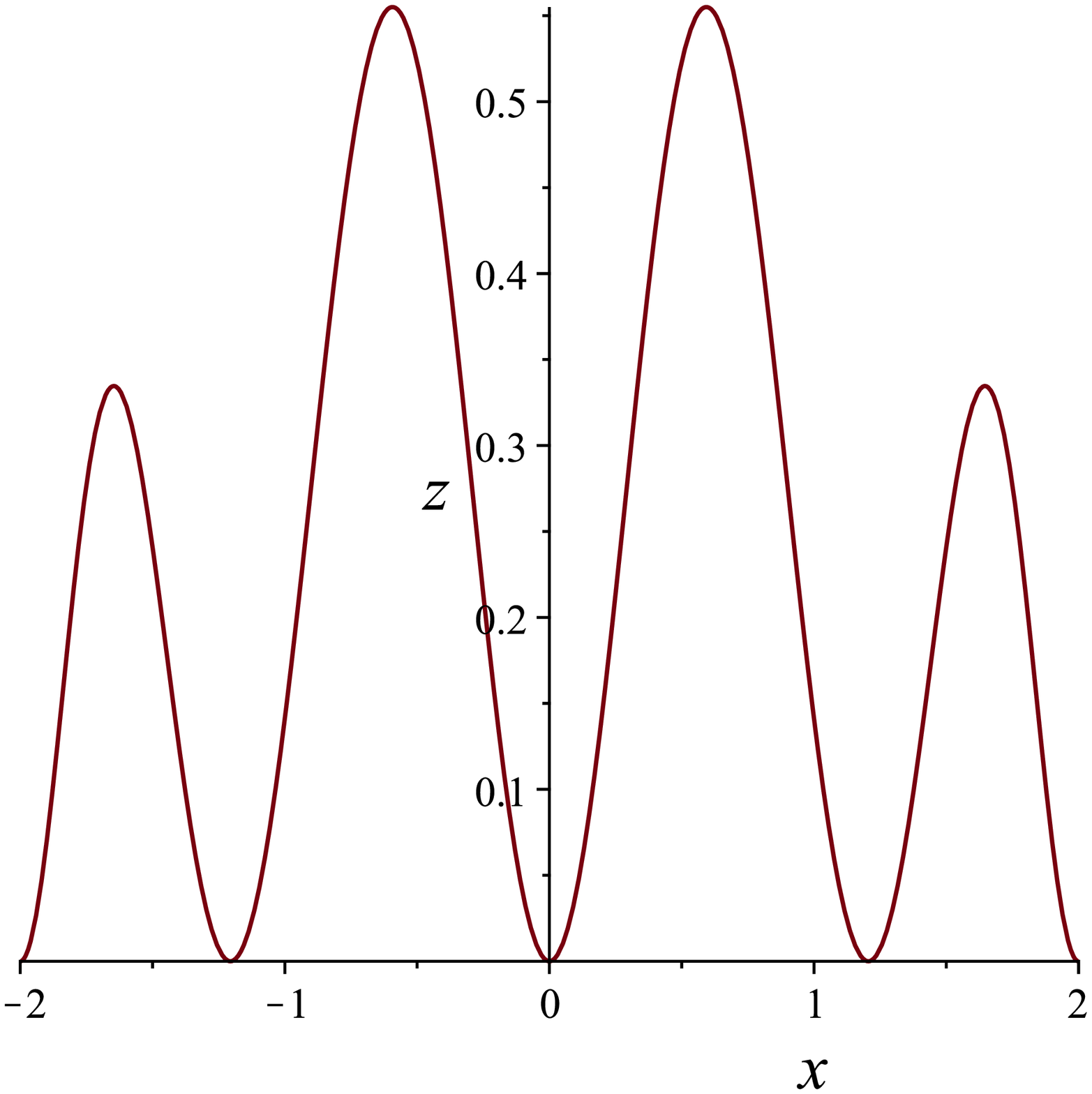} \label{lam4} }
\caption{\footnotesize{Density distributions for Mexican hat potential: \subref{lam1} for $\lambda_{1}=-0.0433$, \subref{lam2} for $\lambda_{2}=-0.59$, \subref{lam3} for $\lambda_{3}=-3.087$, \subref{lam4} for $\lambda_{4}=-7.48$}}
\label{dens-mex}
\end{figure}

The next integrable potential is, for example,
\begin{equation}
\label{newhat}
\widehat{w}=\frac{9\nu^{6}}{4}x^{4}-3\delta x^{2}+\left|\mathrm{HeunT}\left(\frac{\lambda \nu^{6}+\delta^{2}}{\nu^{8}},0,\frac{2\delta}{\nu^{4}},i\nu x\right)\right|^{-4}-\lambda
\end{equation}
with a symmetry given by (\ref{symhat}).

Thus, using the symmetry function $z(x)$ we get solutions of the eigenvalue problem not only for Mexican hat potential, but also for potential (\ref{newhat}) in quadratures.
\end{example}

\section{Lam\'{e} equations}
In this section, we study equations of type (\ref{SchEq}) on elliptic curves:
\begin{equation}
\label{lamepot}
w(x)=C(p_{0})+p_{1}E(p_{0}),
\end{equation}
where $p_{0}=\wp(x)$, $p_{1}=\wp^{\prime}(x)$, $C(p_{0})$ and $E(p_{0})$ are polynomials and $\wp(x)$ is the Weierstrass $p$-function. Weierstrass elliptic functions satisfy an ODE
\begin{equation}
\label{wp}
p_{1}^{2}=4p_{0}^{3}-g_{2}p_{0}-g_{3},
\end{equation}
where $g_{2}$ and $g_{3}$ are \textit{invariants}.

We will look for a symmetry $z(x)$ in the same form as potential:
\begin{equation}
\label{lamesym}
z(x)=A(p_{0})+p_{1}B(p_{0}),
\end{equation}
where $A(p_{0})$ and $B(p_{0})$ are polynomials.

Note that $\wp(x)$ is an even function and therefore due to (\ref{potsym}) three cases are possible.
\begin{itemize}
\item Even case

Here, we have both potential and symmetry as even functions, i.e.
\begin{equation*}
E(p_{0})=0,\quad B(p_{0})=0.
\end{equation*}
\item Odd case

In this case, potential is even, while the symmetry is odd, i.e.
\begin{equation*}
E(p_{0})=0,\quad A(p_{0})=0.
\end{equation*}
\item General case

Both potential and symmetry are neither odd, nor even.
\end{itemize}
Substituting (\ref{lamepot}) and (\ref{lamesym}) to (\ref{lieZ}) and using relation (\ref{wp}) we get an equation
\begin{equation}
\label{genLame}
R_{1}(p_{0})+p_{1}R_{2}(p_{0})=0,
\end{equation}
where
\begin{equation*}
\begin{split}
R_{1}(p_{0})&=\left(4p_{0}^{3}-g_{2}p_{0}-g_{3}\right)^{2}B^{\prime\prime\prime}+3\left(4p_{0}^{3}-g_{2}p_{0}-g_{3}\right)(12p_{0}^{2}-g_{2})B^{\prime\prime}+{}\\&+\left(4(4p_{0}^{3}-g_{2}p_{0}-g_{3})C+300p_{0}^{4}-66p_{0}^{2}g_{2}-48p_{0}g_{3}+\frac{3}{4}g_{2}^{2}\right)B^{\prime}+{}\\&+2B\left((12p_{0}^{2}-g_{2})C+60p_{0}^{3}-9g_{2}p_{0}-6g_{3}\right)+EA(12p_{0}^{2}-g_{2})+{}\\&+2(2EA^{\prime}+BC^{\prime}+AE^{\prime})\left(4p_{0}^{3}-g_{2}p_{0}-g_{3}\right),
\end{split}
\end{equation*}
and
\begin{equation*}
\begin{split}
R_{2}(p_{0})=&(4p_{0}^{3}-g_{2}p_{0}-g_{3})A^{\prime\prime\prime}+\left(18p_{0}^{2}-\frac{3g_{2}}{2}\right)A^{\prime\prime}+4(C+3p_{0})A^{\prime}+{}\\& +2(4p_{0}^{3}-g_{2}p_{0}-g_{3})(BE^{\prime}+2EB^{\prime})+2AC^{\prime}+3BE(12p_{0}^{2}-g_{2}).
\end{split}
\end{equation*}
Equation (\ref{genLame}) is equivalent to the system
\begin{equation}
\label{sysLame}
R_{1}(p_{0})=0,\quad R_{2}(p_{0})=0.
\end{equation}
\subsection{Even case, $E(p_{0})=0$, $B(p_{0})=0$}
In this case, the first equation in (\ref{sysLame}) is trivial and the second one is of the form
\begin{equation}
\label{lameEven}
(4p_{0}^{3}-g_{2}p_{0}-g_{3})A^{\prime\prime\prime}+\left(18p_{0}^{2}-\frac{3g_{2}}{2}\right)A^{\prime\prime}+4(3p_{0}+C)A^{\prime}+2AC^{\prime}=0,
\end{equation}

Let $n$ and $m$ be degrees of the polynomials $A(p_{0})$ and $C(p_{0})$ respectively, i.e. $A(p_{0})=\sum\limits_{i=0}^{n}a_{i}p_{0}^{i}$, $C(p_{0})=\sum\limits_{i=0}^{m}c_{i}p_{0}^{i}$, where $a_{n}\ne0$, $c_{m}\ne0$. Then, the left-hand side of (\ref{lameEven}) is a polynomial in $p_{0}$ of degree $\max(n,n+m-1)$. If $m\ge 2$, then we obtain
\begin{equation*}
(4n+2m)c_{m}a_{n}=0,
\end{equation*}
and we get a contradictory. Therefore only cases $m=1$ make sense. Taking
\begin{equation*}
C(p_{0})=c_{1}p_{0}+c_{0},\quad A(p_{0})=\sum\limits_{i=0}^{n}a_{i}p_{0}^{i},
\end{equation*}
where we put $a_{n}=1$ since the symmetry is defined up to a multiplicative constant, and collecting terms in $p_{0}$, we get the first equation in the form
\begin{equation*}
4(n^{2}+n+c_{1})\left(n+\frac{1}{2}\right)=0,
\end{equation*}
and hence
\begin{equation*}
c_{1}=-n(n+1).
\end{equation*}
The next equations give us coefficients $a_{i}$ consistently.
\begin{theorem}
Coefficients $a_{i}$ are given by the following relations
\begin{equation*}
 \begin{cases}
   \displaystyle a_{n-1}=\frac{c_{0}}{2n-1},
   \\
   \\
     \displaystyle a_{n-2}=\frac{\left(8c_{0}^{2}-ng_{2}(2n-1)^{2}\right)(n-1)}{8(2n-3)(2n-1)^{2}},
 \end{cases}
\end{equation*}
for $i=\overline{n-3,0}$
\begin{equation*}
 a_{i}=\frac{\left((2i^{2}+10i+12)a_{i+3}g_{3}+(2i^{2}+7i+6)a_{i+2}g_{2}-8c_{0}a_{i+1}\right)(i+1)}{4(i+n+1)(2i+1)(i-n)},
\end{equation*}
and $c_{0}$, $g_{2}$, $g_{3}$ may be arbitrary.
\end{theorem}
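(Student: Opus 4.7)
The strategy is direct: substitute the polynomial ansatz $A(p_{0})=\sum_{i=0}^{n}a_{i}p_{0}^{i}$ with $a_{n}=1$ and $C(p_{0})=c_{0}-n(n+1)p_{0}$ into equation (\ref{lameEven}), rewrite the left-hand side as a polynomial in $p_{0}$, and require that every coefficient vanishes. Since $A$ has degree $n$, each of the four terms on the left-hand side has degree $n$ in $p_{0}$, so we obtain $n+1$ linear relations indexed by the powers $p_{0}^{n},p_{0}^{n-1},\dots,p_{0}^{0}$. The plan is to solve these relations in descending order, viewing the equation of order $p_{0}^{k}$ as an equation for the lowest-index unknown $a_{k-(n-1)+n}=a_{k-n+1}$ appearing in it.

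More precisely, I would write $A'=\sum_{j\ge 0}(j+1)a_{j+1}p_{0}^{j}$, $A''=\sum_{j\ge 0}(j+1)(j+2)a_{j+2}p_{0}^{j}$, $A'''=\sum_{j\ge 0}(j+1)(j+2)(j+3)a_{j+3}p_{0}^{j}$, and then extract the coefficient of $p_{0}^{i+n-1}$ in equation (\ref{lameEven}). Each term contributes as follows: $4p_{0}^{3}A'''$ contributes to $a_{i+n}$; $-g_{2}p_{0}A'''$ contributes to $a_{i+n+1}$ times $g_{2}$ (later to be relabeled); $-g_{3}A'''$ contributes to $a_{i+n+2}$ times $g_{3}$; $18p_{0}^{2}A''$ contributes to $a_{i+n}$; $-\tfrac{3g_{2}}{2}A''$ contributes to $a_{i+n}$ with $g_{2}$; $12p_{0}A'$ and $-4n(n+1)p_{0}A'$ contribute to $a_{i+n}$; $4c_{0}A'$ and $-2n(n+1)A$ contribute to $a_{i+n-1}$ and $a_{i+n-1}$ respectively. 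After reindexing so that the unknown being solved for is $a_{i}$, the surviving terms involve only $a_{i},a_{i+1},a_{i+2},a_{i+3}$, and collecting the coefficient of $a_{i}$ produces, after a short factorization, exactly $4(i+n+1)(2i+1)(i-n)/(i+1)$, which is the denominator in the claimed recursion.

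The top-degree coefficient (power $p_{0}^{n}$) has already been checked in the paragraph preceding the theorem and yields the constraint $c_{1}=-n(n+1)$; with this constraint the coefficient of $a_{n}$ in that relation vanishes, which is consistent with the normalization $a_{n}=1$. Descending one step, the coefficient of $p_{0}^{n-1}$ involves only $a_{n-1}$ and $a_{n}$ (the terms with $g_{2},g_{3}$ would require $a_{n+1},a_{n+2}$, which are zero), giving $(2n-1)\cdot(\text{const})\cdot a_{n-1}=c_{0}a_{n}$ and hence the stated $a_{n-1}=c_{0}/(2n-1)$. One step further, the coefficient of $p_{0}^{n-2}$ brings in $g_{2}$ through the $-g_{2}p_{0}A'''$ and $-\tfrac{3g_{2}}{2}A''$ terms but still no $g_{3}$, and solving for $a_{n-2}$ yields the formula displayed in the theorem. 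From $i=n-3$ downwards, all four terms of the recursion are present, giving the general formula.

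The only real obstacle is bookkeeping: one has to carefully track four different contributions to each coefficient, verify that the $a_{i}$-coefficient factors as $4(i+n+1)(2i+1)(i-n)/(i+1)$ (so that the recursion is indeed solvable for $a_{i}$ whenever $0\le i\le n-1$, since then $i+n+1>0$, $2i+1>0$, and $i-n<0$), and check that the numerator polynomials $2i^{2}+10i+12$ and $2i^{2}+7i+6$ arise from consolidating the $g_{3}$-contribution from $-g_{3}A'''$ with the $g_{2}$-contributions from $-g_{2}p_{0}A'''$ and $-\tfrac{3g_{2}}{2}A''$. Once the algebraic identity for the $a_{i}$-coefficient is established, everything else is routine, and the freedom in $c_{0},g_{2},g_{3}$ is manifest because none of these parameters enters any of the constraints imposed on $A$ independently of the recursion itself.
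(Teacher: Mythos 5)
Your plan is correct and coincides with the paper's (implicit) argument: the paper states this theorem without a separate proof, the content being exactly what you describe — substitute $A=\sum a_i p_0^i$, $C=c_0-n(n+1)p_0$ into (\ref{lameEven}), equate coefficients of powers of $p_0$, and observe that the coefficient of $p_0^i$ involves only $a_i,a_{i+1},a_{i+2},a_{i+3}$ with the $a_i$-coefficient equal to $2(2i+1)(i-n)(i+n+1)$, nonzero for $0\le i\le n-1$, which gives the stated recursion and leaves $c_0,g_2,g_3$ free since the $n+1$ coefficient equations exactly determine the $n+1$ unknowns $c_1,a_{n-1},\dots,a_0$. One small correction: your intermediate bookkeeping (extracting the coefficient of $p_0^{\,i+n-1}$, and the unknown $a_{k-n+1}$) is internally inconsistent and should read ``coefficient of $p_0^{\,i}$'' throughout, as your own later statements correctly assume.
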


The possibility for $c_{0}$ to be arbitrary is of great importance for is, because this fact allows us to get solutions to the eigenvalue problem for the Schr\"{o}dinger operator explicitly.
\begin{example}[$n=1$]
We start with $n=1$. In this case we have
\begin{equation*}
w(x)=-2\wp(x)+c_{0},\quad z(x)=\wp(x)+c_{0},
\end{equation*}
which is the classical Lam\'{e} case. Constant $c_{w}$ is found from (\ref{potsym}):
\begin{equation*}
c_{w}=c_{0}^{3}-\frac{c_{0}g_{2}-g_{3}}{4}.
\end{equation*}
The potential and the density distribution are shown in figure~\ref{even-1}.
\begin{figure}[h!]
\centering
\includegraphics[scale=.35]{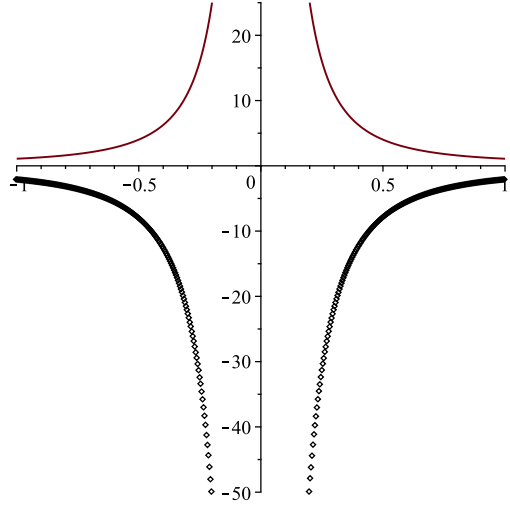}
\caption{Potential (point style) and probability density (line style) for $n=1$}
\label{even-1}       
\end{figure}

Computing integral $\int z^{-1}(x)dx$, we get (see also \cite{LychLych})
\begin{equation*}
\int\frac{dx}{z(x)}=\frac{1}{\sqrt{-c_{w}}}\left(x\wp_{\zeta}(\alpha)+\frac{1}{2}\ln\left(\frac{\wp_{\sigma}(x-\alpha)}{\wp_{\sigma}(x+\alpha)}\right)\right),
\end{equation*}
where $\wp_{\zeta}(x)$ and $\wp_{\sigma}(x)$ are Weierstrass $\zeta$- and $\sigma$-functions respectively and $\alpha$ is the root of the equation
\begin{equation*}
\wp(\alpha)+c_{0}=0.
\end{equation*}
And therefore solution to the Lam\'{e} equation in both cases $c_{w}>0$ and $c_{w}<0$ is given by the same formula
\begin{equation*}
y(x)=\sqrt{|z|}\left(D_{1}\sinh(\mu(x))+D_{2}\cosh(\mu(x))\right),
\end{equation*}
where
\begin{equation*}
\mu(x)=x\wp_{\zeta}(\alpha)+\frac{1}{2}\ln\left(\frac{\wp_{\sigma}(x-\alpha)}{\wp_{\sigma}(x+\alpha)}\right),
\end{equation*}
and $D_{i}\in\mathbb{C}$.

Equation for eigenvalues $c_{0}$ in case of the Dirichlet conditions (\ref{dir-cond}) is of the form
\begin{equation*}
\sinh(\mu(a))\cosh(\mu(b))-\sinh(\mu(b))\cosh(\mu(a))=0.
\end{equation*}
\end{example}
\begin{example}[$n=2$]
Potential and symmetry are
\begin{equation*}
w(x)=-6\wp(x)+c_{0},\quad z(x)=\wp^{2}(x)+\frac{c_{0}}{3}\wp(x)+\frac{c_{0}^{2}}{9}-\frac{g_{2}}{4},
\end{equation*}
The corresponding constant $c_{w}$ is
\begin{equation*}
c_{w}=\frac{1}{324}(c_{0}^{2}-3g_{2})(4c_{0}^{3}-9c_{0}g_{2}-27g_{3}).
\end{equation*}
\end{example}
\begin{example}[$n=3$]
Potential and symmetry are
\begin{eqnarray*}
w(x)&=&-12\wp(x)+c_{0},\\z(x)&=&\wp^{3}(x)+\frac{c_{0}}{5}\wp^{2}(x)+\left(\frac{2c_{0}^{2}}{75}-\frac{g_{2}}{4}\right)\wp(x)+\frac{c_{0}^{3}}{225}-\frac{c_{0}g_{2}}{15}-\frac{g_{3}}{4},
\end{eqnarray*}
The corresponding constant $c_{w}$ is
\begin{equation*}
c_{w}=\frac{c_{0}^{7}}{50625}-\frac{7g_{2}c_{0}^{5}}{11250}-\frac{11g_{3}c_{0}^{4}}{3750}+\frac{31g_{2}^{2}c_{0}^{3}}{6000}+\frac{9g_{2}g_{3}c_{0}^{2}}{200}+\frac{(27g_{3}^{2}-g_{2}^{3})c_{0}}{240}.
\end{equation*}
\end{example}
\subsection{Odd case, $E(p_{0})=0$, $A(p_{0})=0$}
Here, we get the first equation in (\ref{sysLame}) as
\begin{equation}
\label{lameOdd}
\begin{split}
&\left(4p_{0}^{3}-g_{2}p_{0}-g_{3}\right)^{2}B^{\prime\prime\prime}+3\left(4p_{0}^{3}-g_{2}p_{0}-g_{3}\right)(12p_{0}^{2}-g_{2})B^{\prime\prime}+{}\\&+\left(4(4p_{0}^{3}-g_{2}p_{0}-g_{3})C+300p_{0}^{4}-66p_{0}^{2}g_{2}-48p_{0}g_{3}+\frac{3}{4}g_{2}^{2}\right)B^{\prime}+{}\\&+2B\left((12p_{0}^{2}-g_{2})C+60p_{0}^{3}-9g_{2}p_{0}-6g_{3}\right)+{}\\&+2BC^{\prime}\left(4p_{0}^{3}-g_{2}p_{0}-g_{3}\right)=0,
\end{split}
\end{equation}
and the second one is trivial.

If $B(p_{0})$ and $C(p_{0})$ are assumed to be polynomials of degrees $n$ and $m$ respectively, then the left-hand side of (\ref{lameOdd}) is a polynomial of degree $\max(n+m+2,n+3)$ and by the same reasons as in even case only $m=1$ makes sense. If $b_{n}=1$ and $C(p_{0})=c_{1}p_{0}+c_{0}$ then we have a system of $(n+4)$ equations for $(n+4)$ unknowns including $n$ coefficients of $B(p_{0})$, 2 coefficients of $C(p_{0})$ and $g_{2}$, $g_{3}$. The first equation of this system
\begin{equation*}
16(n+2)\left(n^{2}+4n+c_{1}+\frac{15}{4}\right)=0
\end{equation*}
implies
\begin{equation}
\label{c1}
c_{1}=-\frac{15}{4}-n(n+4).
\end{equation}
\begin{theorem}
Coefficients $b_{i}$ are found from a recurrent relation
\begin{equation}
\label{rec}
\begin{split}
&16\,b_{{i-3}} \left( i-3 \right)  \left( i-4 \right)  \left( i-5 \right) +144\,b_{{i-3}} \left( i-3 \right)  \left( i-4 \right)- \\{}&
-8\,g_{{2}} b_{{i-1}} \left( i-1 \right)   \left( i-2 \right)\left( i-3 \right) + \left( 16\,c_{{1}}+300 \right) b_{{i-3}} \left( i-3 \right)- \\{}&
-44\,g_{{3}}b_{{i}}i \left( i-1 \right)  \left( i-2 \right) -96\,g_{{2}}b_{{i-1}} \left( i-1 \right) \left( i-2 \right)  +8\, \left( 4\,c_{{1}}+15 \right) b_{{i-3}}+\\{}&
+32\,c_{{0}}b_{{i-2}} \left( i-2 \right) +8\, \left(4c_{1}+15 \right)g_{2}^{2} b_{{i-3}}b_{{i+1}} \left( i-1 \right)i \left( i+1 \right) +\\{}&
+g_{2}^{2}b_{{i+1}} \left( i-1 \right) i \left( i+1 \right) -36\,g_{{3}}b_{{i}}i \left( i-1 \right) - 4\left( \,c_{{1}}+16 \right) g_{{2}}b_{{i-1}} \left( i-1 \right)+ \\{}&
+24\,c_{{0}}b_{{i-2}}+2\,g_{{3}}g_{{2}}b_{{i+2}}i \left( i+1 \right)  \left( i+2 \right) +3\,g_{2}^{2}b_{{i+1}}i \left( i+1 \right) +\\{}&
+ \left( -4\,c_{{0}}g_{{2}}-4\,g_{{3}} \left( c_{{1}}+12 \right)  \right) b_{{i}}i-2\, \left( 2c_{{1}}+9 \right) g_{{2}}b_{{i-1}}+{}\\&+b_{{i+3}} \left( i+1 \right)  \left( i+2 \right)  \left( i+3 \right) g_{2}^{2}+2\, \left( -c_{{0}}g_{{2}}-g_{{3}} \left( c_{{1}}+6 \right)  \right) b_{{i}}+\\{}&
+3\,g_{{2}}g_{{3}}  b_{{i+2}} \left( i+1 \right)\left( i+2 \right) + \left( -4\,c_{{0}}g_{{3}}+\frac{3g_{2}^{2}}{4} \right) b_{{i+1}} \left( i+1 \right)=0,
\end{split}
\end{equation}
where $i=n+2,n+1,\ldots,3$, with initial conditions $b_{n+5}=b_{n+4}=b_{n+3}=b_{n+2}=b_{n+1}=0$, $b_{n}=1$, and $c_{1}$ is given by (\ref{c1}).
\end{theorem}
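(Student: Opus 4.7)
The plan is to substitute the polynomial ansatz directly into equation (\ref{lameOdd}) and extract the recurrence from the coefficients of the resulting identity in $p_0$. Writing $B(p_0) = \sum_{i=0}^{n} b_i p_0^i$ with $b_n=1$, together with the convention $b_j = 0$ for $j > n$ (which encodes the stated initial conditions $b_{n+1} = \cdots = b_{n+5} = 0$), and $C(p_0) = c_1 p_0 + c_0$, every term on the left-hand side of (\ref{lameOdd}) becomes a polynomial in $p_0$. For instance, $(4p_0^3 - g_2 p_0 - g_3)^2 B'''$ expands into six monomial multiplications applied to $\sum_j (j)_3\, b_j\, p_0^{j-3}$, each shifting the index of $b_j$ by a fixed amount.

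First, extracting the coefficient of $p_0^{n+3}$ (the maximum degree, since $\deg C = 1$) and dividing by $b_n = 1$ gives $16(n+2)\bigl(n^2 + 4n + c_1 + \tfrac{15}{4}\bigr)$, forcing the expression (\ref{c1}) for $c_1$ as already argued just before the theorem. Second, for each lower power $p_0^i$, collecting contributions yields a linear relation among the seven consecutive coefficients $b_{i-3}, b_{i-2}, \ldots, b_{i+3}$, with precise coefficients dictated by which monomial in $(4p_0^3 - g_2 p_0 - g_3)^2$, $(4p_0^3 - g_2 p_0 - g_3)(12 p_0^2 - g_2)$, $(4p_0^3 - g_2 p_0 - g_3)C$, and the remaining factors multiplies which derivative of $B$. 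Solving this relation for the lowest unknown, $b_{i-3}$, and letting $i$ range over $n+2, n+1, \ldots, 3$ yields exactly (\ref{rec}); solvability requires the leading coefficient of $b_{i-3}$ to be nonzero in this range, which can be checked explicitly using $c_1 = -\tfrac{15}{4} - n(n+4)$.

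The main obstacle is bookkeeping rather than any conceptual step: equation (\ref{lameOdd}) has roughly a dozen differential terms, each of which, after multiplication by polynomials of degree up to four in $p_0$ and differentiation up to order three in $B$, splits into several monomial sums with distinct index shifts. A systematic way is to rewrite every summand $p_0^s \cdot B^{(k)}$ as $\sum_j (j)_k\, b_j\, p_0^{j+s-k}$ with $(j)_k = j(j-1)\cdots(j-k+1)$, reindex so every contribution carries a common power $p_0^i$, and then read off and match the coefficient against (\ref{rec}) term by term. No conceptual difficulty is expected beyond this verification, and the whole argument is essentially a careful expansion of the single polynomial identity (\ref{lameOdd}).
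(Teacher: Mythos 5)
Your proposal is correct and coincides with the paper's (implicit) argument: the theorem is obtained by substituting $B(p_0)=\sum_i b_i p_0^i$, $C(p_0)=c_1p_0+c_0$ into (\ref{lameOdd}), collecting the coefficient of each power $p_0^i$, and solving the resulting seven-term linear relation for $b_{i-3}$, with the top-degree coefficient forcing (\ref{c1}). Your added remark that the coefficient of $b_{i-3}$ is nonzero for $i=n+2,\ldots,3$ (it factors as $16(i-1)(i-3-n)(i+n+1)$ after substituting $c_1=-\tfrac{15}{4}-n(n+4)$) is a useful check the paper leaves unstated.
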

The above theorem gives us $b_{j}$, $j=\overline{0,n-1}$, as functions of $c_{0}$, $g_{2}$ and $g_{3}$. The last three equations can be considered as equations for $c_{0}$, $g_{2}$, $g_{3}$ while $b_{j}$ are assumed to be found from (\ref{rec}):
\begin{equation}
\label{gc}
\begin{split}
&6\,b_{{3}}g_{3}^{2}+\left(  \left( 2\,{n}^{2}+8\,n-\frac{9}{2} \right) b_{{0}}-16\,c_{{0}}b_{{1}}+24\,b_{{2}}g_{{2}} \right) g_{{3}}-2\,b_{{0}}c_{{0}}g_{{2}}+\frac{3b_{{1}}g_{2}^{2}}{4}=0,{}\\&
\frac{15}{2}\,g_{2}^{2}b_{{2}}+ \left( -6\,c_{{0}}b_{{1}}+30\,b_{{3}}g_{{3}}+\,b_{{0}} 4\left( {n}^{2}+16\,n-3 \right)  \right) g_{{2}}+{}\\&+6\,g_{{3}} \left( 4\,b_{{4}}g_{{3}}+ \left( {n}^{2}+4\,n-{\frac{25}{4}} \right) b_{{1}}-\frac{4}{3}\,c_{{0}}b_{{2}} \right)=0,{}\\&
{\frac {105\,g_{2}^{2}b_{{3}}}{4}}+\, \left( 84\,b_{{4}}g_{{3}}-10\,c_{{0}}b_{{2}}+4\,b_{{1}} \left(2 {n}^{2}+8\,n-13 \right)  \right) g_{{2}}+{}\\&+60\,b_{{5}}g_{3}^{2}+\left(  \left( 10\,{n}^{2}+40\,n-\frac{285}{2} \right) b_{{2}}-48\,c_{{0}}b_{{3}} \right) g_{{3}}+24\,b_{{0}}c_{{0}}=0.
\end{split}
\end{equation}
One of solutions to (\ref{gc}) is trivial, i.e. $g_{3}=g_{2}=c_{0}=0$. In this case we have $b_{j}=0$, $j=\overline{0,n-1}$ due to (\ref{rec}), and $\wp(x)=(x+w_{0})^{-2}$, where $w_{0}$ is a constant.
\begin{equation}
\label{odd-1}
w(x)=\left(-\frac{15}{4}-n(n+4)\right)(x+w_{0})^{-2},\quad z(x)=-2(x+w_{0})^{-2n-3}.
\end{equation}
One can show that constant $c_{w}$ for pair (\ref{odd-1}) is equal to zero and this case is therefore parabolic. General solution to the corresponding Lam\'{e} equation is
\begin{equation*}
y(x)=\alpha_{1}(x+w_{0})^{-3/2-n}+\alpha_{2}(x+w_{0})^{n+5/2},
\end{equation*}
where $\alpha_{1}$ and $\alpha_{2}$ are constants.
\section*{Acknowledgements}
This work was partially supported by the Russian Foundation for Basic Research (project 18-29-10013).

\end{document}